\newcommand{\keywords}[1]{\par\addvspace\baselineskip
\noindent\keywordname\enspace\ignorespaces#1}
\newtheorem{algorithm}[theorem]{Algorithm}
\begin{document}

\mainmatter  

\title{An Algorithm Computing the Core of a K\"{o}nig-Egerv\'{a}ry Graph}

\titlerunning{Core of a K\"{o}nig-Egerv\'{a}ry Graph}

%
%
\author{Vadim E. Levit\inst{1} \and Eugen Mandrescu\inst{2}}
\authorrunning{Levit and Mandrescu}

\institute{Ariel University Center of Samaria, ISRAEL \\ \email{levitv@ariel.ac.il}
\and Holon Institute of Technology, ISRAEL\\ \email{eugen\_m@hit.ac.il}
\url{ }}

%
%

\toctitle{Lecture Notes in Computer Science}
\tocauthor{Authors' Instructions}
\maketitle

\begin{abstract}
A set $S$ of vertices is \textit{independent} (or \textit{stable}) in a graph $G$ if no two
vertices from $S$ are adjacent, and $\alpha(G)$ is the cardinality of a largest
(i.e., \textit{maximum}) independent set of $G$.

\hskip 10pt $G$ is called a \textit{K\"{o}nig-Egerv\'{a}ry graph} if its order equals
$\alpha(G)+\mu(G)$, where $\mu(G)$ denotes the size of a maximum matching. By
\textrm{core}$(G)$ we mean the intersection of all maximum independent sets
of $G$.

\hskip 10pt To decide whether \textrm{core}$(G)=\emptyset$ is known to be
\textbf{NP}-hard \cite{BGL2002}.

\hskip 10pt In this paper, we present some polynomial time algorithms finding
\textrm{core}$(G)$ of a K\"{o}nig-Egerv\'{a}ry graph $G$.
\keywords{maximum independent set, maximum matching, core}
\end{abstract}

\section{Introduction}

\hskip 10pt Throughout this paper $G=(V,E)$ is a finite, undirected, loopless and without
multiple edges graph with vertex set $V=V(G)$ of cardinality $\left\vert
V\left(  G\right)  \right\vert =n$, and edge set $E=E(G)$ of cardinality
$\left\vert E\left(  G\right)  \right\vert =m$.

If $X\subset V$, then $G[X]$ is the subgraph of $G$ spanned by $X$. By $G-W$
we mean the subgraph $G[V-W]$, if $W\subset V(G)$. The neighborhood of a
vertex $v\in V$ is the set $N(v)=\{w:w\in V$ \ \textit{and} $vw\in E\}$, while
$N(A)=\cup\{N(v):v\in A\}$ and $N[A]=A\cup N(A)$ for $A\subset V$.

A set $S\subseteq V(G)$ is \textit{independent} if no two vertices from $S$
are adjacent; by $\mathrm{Ind}(G)$ we mean the set of all the independent sets
of $G$. An independent set of maximum size will be referred to as a
\textit{maximum independent set} of $G$, and the \textit{independence number
}of $G$ is $\alpha(G)=\max\{\left\vert S\right\vert :S\in\mathrm{Ind}(G)\}$.
In the sequel, the family $\{S:S$ \textit{is a maximum independent set of} $G\}$
is denoted by $\Omega(G)$.

A \textit{matching} in a graph $G=(V,E)$ is a set $M\subseteq E$ such
that no two edges of $M$ share a common vertex. A matching of maximum
cardinality $\mu(G)$ is a \textit{maximum matching}, and a \textit{perfect
matching} is one covering all vertices of $G$.

It is known that $\alpha(G)+\mu(G)\leq\left\vert V\left(  G\right)
\right\vert $. If $\alpha(G)+\mu(G)=\left\vert V\left(  G\right)  \right\vert
$, then $G$ is called a \textit{K\"{o}nig-Egerv\'{a}ry graph }(Deming
\cite{dem}, and Sterboul \cite{ster}). It is easy to see that if $G$ is a
K\"{o}nig-Egerv\'{a}ry\emph{ }graph, then $\alpha(G)\geq\mu(G)$, and that a
graph $G$ having a perfect matching is a K\"{o}nig-Egerv\'{a}ry\emph{ }graph
if and only if $\alpha(G)=\mu(G)$.

K\"{o}nig-Egerv\'{a}ry graphs were investigated in several papers, among we quote
\cite{bourpull,korach,levm2,levm4,LevMan3,lov,lovpl,pulleybl}, and generalized
in \cite{BouHamSim1984,pasdema}.

According to a celebrated result of K\"{o}nig \cite{koen}, and Egerv\'{a}ry
\cite{eger}, every bipartite graph is a K\"{o}nig-Egerv\'{a}ry\emph{ }graph.
This class includes non-bipartite graphs as well (see, for instance, the
graphs $H_{1}$ and $H_{2}$ in Figure \ref{fig1}).

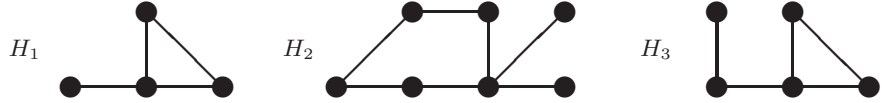
\begin{figure}[h]
\setlength{\unitlength}{1cm}
\begin{picture}(5,1.3)\thicklines
\multiput(1,0)(1,0){3}{\circle*{0.29}}
\put(2,1){\circle*{0.29}}
\put(1,0){\line(1,0){2}}
\put(2,0){\line(0,1){1}}
\put(3,0){\line(-1,1){1}}
\put(0.4,0.5){\makebox(0,0){$H_{1}$}}
\multiput(4.5,0)(1,0){4}{\circle*{0.29}}
\put(6.5,1){\circle*{0.29}}
\put(5.5,1){\circle*{0.29}}
\put(7.5,1){\circle*{0.29}}
\put(4.5,0){\line(1,0){3}}
\put(5.5,1){\line(1,0){1}}
\put(6.5,0){\line(0,1){1}}
\put(4.5,0){\line(1,1){1}}
\put(6.5,0){\line(1,1){1}}
\put(4,0.5){\makebox(0,0){$H_{2}$}}
\multiput(9.5,0)(1,0){3}{\circle*{0.29}}
\multiput(9.5,1)(1,0){2}{\circle*{0.29}}
\put(9.5,0){\line(1,0){2}}
\put(9.5,0){\line(0,1){1}}
\put(10.5,0){\line(0,1){1}}
\put(10.5,1){\line(1,-1){1}}
\put(8.7,0.5){\makebox(0,0){$H_{3}$}}
\end{picture}\caption{Only $H_{3}$ is not a K\"{o}nig--Egerv\'{a}ry graph, as
$\alpha(H_{3})+\mu(H_{3})=4<5=\left\vert V(H_{3})\right\vert $.}
\label{fig1}
\end{figure}

A characterization of K\"{o}nig-Egerv\'{a}ry graphs has been found
independently by Deming \cite{dem} and Sterboul \cite{ster}. Recently, it has
been presented a forbidden subgraph characterization of K\"{o}nig-Egerv\'{a}ry
graphs \cite{KoNgPeis}. Other characterizations of K\"{o}nig-Egerv\'{a}ry
graphs can be found in \cite{Larson2009,LevManKE091,LevManKE092}.

\begin{theorem}
\label{th1}\cite{Gavril}, \cite{dem} Given a graph $G$ and a maximum matching
of $G$, one can test whether $G$ is a K\"{o}nig-Egerv\'{a}ry graph in time $O(m+n)$.
\end{theorem}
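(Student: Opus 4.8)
The plan is to reduce the test to 2-SAT, which can be decided in time linear in the size of the formula (Aspvall, Plass and Tarjan), after reformulating the K\"{o}nig-Egerv\'{a}ry property with respect to the given maximum matching $M$. Let $U$ be the set of vertices exposed by $M$; since $M$ is maximum, $U$ is independent and $\left\vert U\right\vert =n-2\mu(G)$. For every independent set $S$ one has
\[
\left\vert S\right\vert \;=\;\left\vert S\cap U\right\vert +\sum_{uv\in M}\left\vert S\cap\{u,v\}\right\vert \;\le\;\left\vert U\right\vert +\mu(G)\;=\;n-\mu(G),
\]
so $G$ is a K\"{o}nig-Egerv\'{a}ry graph if and only if a maximum independent set meets this bound, and equality forces $U\subseteq S$ together with $\left\vert S\cap\{u,v\}\right\vert =1$ for every $uv\in M$. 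Hence $G$ is K\"{o}nig-Egerv\'{a}ry precisely when one can pick exactly one endpoint from each matching edge so that the picked vertices together with $U$ form an independent set.

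I would then encode this selection as a 2-SAT instance $\Phi$: introduce a Boolean variable for each edge of $M$, with an arbitrary orientation, reading the variable (resp.\ its negation) as ``the head (resp.\ tail) of that matching edge is picked''; this automatically enforces the one-endpoint-per-matching-edge condition. Each edge $ab\in E(G)\setminus M$ contributes at most one clause expressing that $a$ and $b$ are not both picked: a two-literal clause if $a,b$ are $M$-saturated and lie on distinct matching edges; a unit clause if exactly one of $a,b$ is exposed; nothing if $a,b$ are the two endpoints of the same matching edge; and the case of two exposed endpoints does not arise because $U$ is independent. Using the displayed identity, the satisfying assignments of $\Phi$ correspond to the admissible selections of the previous paragraph, so $\Phi$ is satisfiable if and only if $G$ is a K\"{o}nig-Egerv\'{a}ry graph.

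For the running time, $\Phi$ has at most $\mu(G)\le n/2$ variables and at most $m$ clauses, and from an adjacency-list description of $G$ together with $M$ it is built in $O(m+n)$ time: mark the exposed vertices, record for each saturated vertex its matching partner and the associated variable, then scan the edge list once. Since 2-SAT is decidable in time linear in the formula size, the overall test runs in $O(m+n)$.

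The step I expect to be the main obstacle is the correctness of the reduction, and within it the ``only if'' direction: it is essential to start from a \emph{maximum} independent set (not merely a maximal one), so that the displayed inequality is tight and therefore forces both $U\subseteq S$ and $\left\vert S\cap\{u,v\}\right\vert =1$ for all $uv\in M$ at once; once this is in place, checking that each edge type produces exactly the right clause, and that a satisfying assignment rebuilds an independent set of size $n-\mu(G)$, is a routine case analysis. It is also worth recording that $M$ itself must be supplied and not merely its size $\mu(G)$, since the variables of $\Phi$ are indexed by the edges of $M$; this is exactly the hypothesis of the theorem.
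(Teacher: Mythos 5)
Your argument is correct, but it cannot be compared with a proof in the paper, because the paper does not prove this statement at all: Theorem \ref{th1} is quoted from Gavril \cite{Gavril} and Deming \cite{dem}. Measured against those original algorithms, your route is genuinely different. Gavril and Deming work directly on the graph: starting from the vertices exposed by $M$, they propagate forced membership in a vertex cover (equivalently, forced exclusion from the independent set) along $M$-alternating paths and detect the obstructions (odd blossom-type structures) that make equality $\alpha(G)+\mu(G)=n$ fail. You instead observe that $G$ is K\"{o}nig-Egerv\'{a}ry exactly when one endpoint of each matching edge can be chosen so that, together with the exposed set $U$, the choice is independent, and you encode this selection as a 2-SAT instance with one variable per matching edge, solved in linear time by Aspvall--Plass--Tarjan. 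The key steps all check out: $U$ is independent because $M$ is maximum; the tight case of $\left\vert S\right\vert \le \left\vert U\right\vert +\mu(G)$ forces $U\subseteq S$ and exactly one chosen endpoint per matching edge; the clause generation (a 2-clause for an edge joining saturated vertices on distinct matching edges, a unit clause for an edge joining an exposed vertex to a saturated one, nothing otherwise, the exposed--exposed case being impossible) is exhaustive in a simple graph; and the formula has at most $\mu(G)\le n/2$ variables and $m$ clauses, built and solved in $O(m+n)$. You are also right that the matching itself, not just $\mu(G)$, is needed, and that maximality of $M$ (not mere maximality of the independent set) is what makes the counting argument tight. What your approach buys is modularity and a short correctness proof, delegating the combinatorial work to a standard black box; what the original approach buys is self-containedness and an explicit certificate (a K\"{o}nig-Egerv\'{a}ry cover or a violating blossom structure), which the 2-SAT solver's implication-graph components would reproduce only implicitly.
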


\begin{theorem}
\label{th2}\cite{Vazi} Given a graph $G$, one can find a maximum matching in
time $O(m\bullet\sqrt{n})$.
\end{theorem}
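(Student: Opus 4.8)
The plan is to follow the ``phase'' paradigm for augmenting-path matching algorithms, originating with Hopcroft and Karp in the bipartite case and extended to general graphs by Micali and Vazirani \cite{Vazi}. Recall Berge's theorem: a matching $M$ is maximum if and only if $G$ admits no $M$-augmenting path, i.e.\ no path whose two endpoints are $M$-exposed and whose edges alternate between $E\setminus M$ and $M$. The algorithm keeps a matching $M$, initially empty, and proceeds in phases; in each phase it computes an inclusion-maximal family $\mathcal{P}$ of pairwise vertex-disjoint \emph{shortest} $M$-augmenting paths and then augments $M$ simultaneously along every path in $\mathcal{P}$. Two facts then yield the theorem: (i) only $O(\sqrt{n})$ phases are ever needed, and (ii) a single phase can be carried out in $O(m)$ time; multiplying gives the claimed bound $O(m\bullet\sqrt{n})$.

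For (i) I would use the symmetric-difference argument. If $M$ is the current matching and $M^{*}$ a maximum one, then $M\triangle M^{*}$ is a disjoint union of paths and even cycles and contains at least $|M^{*}|-|M|$ vertex-disjoint $M$-augmenting paths; hence a shortest $M$-augmenting path has length at most $n/(|M^{*}|-|M|)$. The second ingredient — and the reason one augments along a \emph{maximal} set of \emph{shortest} paths at once — is the monotonicity lemma: the length of a shortest augmenting path strictly increases from each phase to the next. Combining the two, after $\lceil\sqrt{n}\rceil$ phases the shortest augmenting path has length exceeding $\sqrt{n}$, so at that moment $|M^{*}|-|M|<\sqrt{n}$, and since every subsequent phase raises $|M|$ by at least one, fewer than $\sqrt{n}$ further phases occur. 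Total: $O(\sqrt{n})$ phases.

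For (ii) the \emph{bipartite} case is straightforward and I would present it first: one breadth-first search from the set of exposed vertices, obeying the alternation rule, builds a layered (``level'') graph in $O(m)$ time and exposes the current shortest augmenting length $\ell$; then one depth-first search through the level graph repeatedly peels off vertex-disjoint augmenting paths, discarding dead-end vertices as it backtracks, in $O(m)$ total. The main obstacle is (ii) for \emph{general} graphs: odd cycles (Edmonds' blossoms) break the clean layering, since a vertex can be reached both on an even and on an odd shortest alternating walk. Micali and Vazirani overcome this with a \emph{double depth-first search} that grows two alternating trees from the two ends of a candidate path and, whenever the search frontiers collide across a ``bridge'' (an edge joining two vertices of equal level), contracts the corresponding blossom on the fly, while maintaining \emph{evenlevel}/\emph{oddlevel} labels, a bridge list, and a union-find-style record of blossom membership. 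Arranging the blossom contractions, base computations, and path reconstructions so that the \emph{whole} phase still costs only $O(m)$ amortized is the delicate core of the argument, and I would develop it only after the phase-count analysis — which is identical for bipartite and general graphs — has been established, so that only the per-phase subroutine needs to be upgraded.
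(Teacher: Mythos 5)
First, note that the paper does not prove this statement at all: Theorem~\ref{th2} is quoted as a black box from Vazirani \cite{Vazi} (the Micali--Vazirani algorithm), so there is no internal proof to compare yours against; the comparison can only be with the cited literature. Judged on its own, your sketch correctly reproduces the standard Hopcroft--Karp phase framework: Berge's theorem, the symmetric-difference argument giving at least $|M^{*}|-|M|$ vertex-disjoint augmenting paths and hence a shortest one of length at most about $n/(|M^{*}|-|M|)$, and the conclusion that $O(\sqrt{n})$ phases suffice once each phase augments along a maximal family of vertex-disjoint shortest augmenting paths. For bipartite graphs your layered BFS plus path-peeling DFS description of an $O(m)$ phase is essentially complete.

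The genuine gap is exactly the step you flag and then postpone: for general graphs, the $O(m)$-per-phase procedure \emph{together with its correctness} is the whole theorem. Your remark that the phase-count analysis ``is identical for bipartite and general graphs'' is not accurate as stated: the $O(\sqrt{n})$ phase bound rests on the monotonicity lemma (augmenting along a maximal set of disjoint shortest augmenting paths strictly increases the shortest augmenting length), and while this follows from a simple layered-graph argument in the bipartite case, in general graphs shortest alternating paths may pass through blossoms, a vertex can carry distinct even and odd levels, and proving both the monotonicity lemma and the claim that the double depth-first search with on-the-fly blossom contraction really finds a maximal family of disjoint \emph{shortest} augmenting paths in $O(m)$ time requires the evenlevel/oddlevel/tenacity machinery that constitutes the bulk of \cite{Vazi} --- that paper exists precisely because this correctness argument had long resisted a complete proof. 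So what you have is an accurate road map to the known proof, with the segment in which the theorem's entire difficulty resides left as a pointer rather than an argument.
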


As a consequence of Theorems \ref{th1}, \ref{th2} one can deduce the following.

\begin{corollary}
\label{cor1}Given a graph $G$, one can
check in time $O(m\bullet\sqrt{n})$ whether $G$ is a K\"{o}nig-Egerv\'{a}ry graph.
\end{corollary}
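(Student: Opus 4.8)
The plan is simply to pipeline the two results quoted just above. \emph{Step one:} apply the maximum matching algorithm of Theorem~\ref{th2} to $G$, obtaining a maximum matching $M$ of $G$ in time $O(m\cdot\sqrt{n})$. \emph{Step two:} feed $G$ together with this particular matching $M$ into the recognition procedure of Theorem~\ref{th1}, which in time $O(m+n)$ decides whether $\left\vert V(G)\right\vert =\alpha(G)+\mu(G)$, i.e. whether $G$ is a K\"{o}nig-Egerv\'{a}ry graph. The output of the second step is exactly the answer we want, and the total time spent is $O(m\cdot\sqrt{n})+O(m+n)$.

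The only remaining point is to verify that $O(m\cdot\sqrt{n})+O(m+n)=O(m\cdot\sqrt{n})$. The summand $m$ is dominated by $m\cdot\sqrt{n}$ for $n\ge 1$, so the sole nuisance is the additive $n$, which can exceed $m$ only in the degenerate case of many isolated vertices. This is disposed of by a one-line preprocessing reduction: deleting from $G$ all its isolated vertices decreases both $\left\vert V(G)\right\vert$ and $\alpha(G)$ by the same amount while leaving $\mu(G)$ unchanged, hence preserves the equality $\left\vert V(G)\right\vert =\alpha(G)+\mu(G)$; after this reduction $n\le 2m$, and the $O(m+n)$ term is absorbed into $O(m\cdot\sqrt{n})$.

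I do not expect a genuine obstacle here, since the statement is a direct corollary of Theorems~\ref{th1} and~\ref{th2}; the only thing requiring care is precisely the bookkeeping just described, namely reconciling the $O(m+n)$ bound of the linear-time test with the target bound $O(m\cdot\sqrt{n})$, which needs the harmless assumption (or the preprocessing step above) that $G$ has no isolated vertices.
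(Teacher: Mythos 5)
Your proposal is correct and is exactly the argument the paper intends: the corollary is stated as an immediate consequence of Theorems \ref{th2} and \ref{th1}, pipelined in that order, with total cost $O(m\cdot\sqrt{n})+O(m+n)=O(m\cdot\sqrt{n})$. Your extra remark about absorbing the additive $n$ via deleting isolated vertices (which preserves the K\"{o}nig-Egerv\'{a}ry equality) is a harmless refinement the paper leaves implicit.
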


Let us recall that \textrm{core}$(G)=\cap\{S:S\in\Omega(G)\}$, \cite{levm3}.

\begin{proposition}
\label{prop1}\cite{levm3} For a connected bipartite graph $G=(A,B,E)$ of order
at least two, the following assertions are true:

\emph{(i)} $\alpha(G)>|V(G)|/2$ if and only if $\left\vert \mathrm{core}
(G)\right\vert \geq2$;

\emph{(ii)} $\alpha(G)=|V(G)|/2$ if and only if $\left\vert \mathrm{core}
(G)\right\vert =0$ and $A,B\in\Omega(G)$.
\end{proposition}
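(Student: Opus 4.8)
The plan is to base the whole argument on a parametrization of $\Omega(G)$ by subsets of one colour class. For $S\subseteq A$ put $I_S=S\cup(B\setminus N(S))$; this set is always independent, and since every independent set $I$ satisfies $|I|\le|I\cap A|+|B|-|N(I\cap A)|$, a short double count shows that, writing $f(S):=|S|-|N(S)|$ and $d:=\max_{S\subseteq A}f(S)$, one has $\alpha(G)=|B|+d$ and $\Omega(G)=\{I_S:f(S)=d\}$. Since a connected bipartite graph is a K\"{o}nig-Egerv\'{a}ry graph (K\"{o}nig's theorem, quoted above), $\alpha(G)+\mu(G)=n$, hence $d=|A|-\mu(G)$; in particular $\alpha(G)\ge n/2$. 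Call $S\subseteq A$ \emph{critical} if $f(S)=d$.

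I would then observe that the critical sets form a lattice: $N(S\cup T)=N(S)\cup N(T)$ while $|N(\cdot)|$ is submodular on $2^{A}$, so $S,T$ critical implies $S\cup T$ and $S\cap T$ critical. Let $A^{-}$ and $A^{+}$ be the least and the greatest critical sets. Reading off which vertices of $A$, resp. of $B$, lie in every $I_S$ gives $\mathrm{core}(G)=A^{-}\cup(B\setminus N(A^{+}))$; as $A^{+}$ is critical, $|B\setminus N(A^{+})|=|B|-(|A^{+}|-d)=\alpha(G)-|A^{+}|$, so
\[|\mathrm{core}(G)|=\alpha(G)-(|A^{+}|-|A^{-}|).\]
Part (ii) now drops out: if $\alpha(G)=n/2$ then $\mu(G)=n/2$, so $G$ has a perfect matching, $|A|=|B|=n/2=\alpha(G)$, whence $A,B\in\Omega(G)$ and $\mathrm{core}(G)\subseteq A\cap B=\emptyset$; conversely $A,B\in\Omega(G)$ forces $|A|=|B|=\alpha(G)$ and $n=2\alpha(G)$. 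The backward direction of (i) follows too: $\alpha(G)=n/2$ just gave $\mathrm{core}(G)=\emptyset$, so $|\mathrm{core}(G)|\ge2$ forces $\alpha(G)\ne n/2$, hence $\alpha(G)>n/2$ (as $\alpha(G)\ge n/2$).

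The real content is the forward direction of (i). First I would prove by a Hall argument that $A^{+}\setminus A^{-}$ and $N(A^{+})\setminus N(A^{-})$ have the same size $|A^{+}|-|A^{-}|$ and admit a perfect matching between them: for $Y\subseteq A^{+}\setminus A^{-}$, criticality of $A^{-}$ gives $|N(A^{-}\cup Y)|\ge|A^{-}\cup Y|-d=|N(A^{-})|+|Y|$, so $Y$ has at least $|Y|$ neighbours outside $N(A^{-})$, all of them in $N(A^{+})$. Hence $|A^{+}|-|A^{-}|\le\mu(G)$ and $|\mathrm{core}(G)|\ge\alpha(G)-\mu(G)=n-2\mu(G)$; when $\alpha(G)>n/2$ this is at least $1$, and at least $2$ unless $n$ is odd and $n-2\mu(G)=1$.

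The step I expect to be the main obstacle is ruling out $|\mathrm{core}(G)|=1$ in that tight case. There $|A^{+}|-|A^{-}|=\mu(G)$, so $|A^{-}|=|A^{+}|-\mu(G)=|A^{+}|-|A|+d\le d$; criticality of $A^{-}$ then forces $|N(A^{-})|=|A^{-}|-d\le0$, i.e. $N(A^{-})=\emptyset$, hence $A^{-}=\emptyset$ (a connected graph on at least two vertices has no isolated vertex). Then $d=0$, $\mu(G)=|A|$, $A^{+}=A$, and criticality of $A^{+}=A$ together with $N(A)=B$ (connectedness) yields $|A|=|B|$, so $n$ is even — contradicting $n$ odd. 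Therefore $|\mathrm{core}(G)|\ge2$, which finishes (i). (Alternatively one could shortcut this forward direction by invoking a structural description of cores of K\"{o}nig-Egerv\'{a}ry graphs, but the self-contained route above seems cleanest; the only places where any care is needed are the lattice/parametrization bookkeeping and this final parity chase.)
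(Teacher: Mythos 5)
The paper does not actually prove Proposition \ref{prop1}: it is quoted from \cite{levm3}, so there is no in-paper argument to compare against. Your proof is, as far as I can check, correct and self-contained, and it takes a route different from the cited source: you parametrize $\Omega(G)$ by the maximum-deficiency (``critical'') subsets of one colour class, use supermodularity of $|S|-|N(S)|$ to get the lattice structure with extreme elements $A^{-}\subseteq A^{+}$, and read off the exact identity $|\mathrm{core}(G)|=\alpha(G)-\left(|A^{+}|-|A^{-}|\right)$, from which (ii), the backward direction of (i), and the bound $|\mathrm{core}(G)|\ge\alpha(G)-\mu(G)$ all follow; the last bound is essentially the inequality this paper later quotes from \cite{BGL2002}, so your argument reproves it for bipartite graphs along the way. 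The delicate points all check out: the Hall-type computation showing $A^{+}\setminus A^{-}$ matches into $N(A^{+})\setminus N(A^{-})$ (giving $|A^{+}|-|A^{-}|\le\mu(G)$) is sound, and the final exclusion of $|\mathrm{core}(G)|=1$ in the case $\alpha(G)-\mu(G)=1$ correctly invokes connectivity twice (no isolated vertices forces $A^{-}=\emptyset$, and $N(A)=B$ forces $|A|=|B|$, contradicting $n$ odd) --- which is exactly where the hypothesis of connectedness must enter, as the example $K_{2}\cup K_{1}$ shows. By contrast, the original proof in \cite{levm3} (and the machinery this paper itself builds on, e.g.\ Theorem \ref{TH}) works with maximum matchings and the K\"{o}nig-Egerv\'{a}ry equality $\alpha(G)+\mu(G)=n$ directly rather than with the deficiency lattice; your approach buys an explicit formula for $\mathrm{core}(G)$ as $A^{-}\cup\left(B\setminus N(A^{+})\right)$, which is more structural information than the statement requires, at the cost of the bookkeeping about critical sets.
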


Notice that Proposition \ref{prop1}\emph{(i)} is not true for non-bipartite
K\"{o}nig-Egerv\'{a}ry graphs; e.g., the graph $G_{2}$ from Figure \ref{fig3}.

\begin{figure}[h]
\setlength{\unitlength}{1cm}
\begin{picture}(5,1.3)\thicklines
\multiput(3,0)(1,0){4}{\circle*{0.29}}
\multiput(3,1)(1,0){3}{\circle*{0.29}}
\multiput(3,0)(1,0){3}{\line(0,1){1}}
\put(3,1){\line(1,0){1}}
\put(3,0){\line(1,0){3}}
\put(4.7,1){\makebox(0,0){$u$}}
\put(6,0.3){\makebox(0,0){$v$}}
\put(2,0.5){\makebox(0,0){$G_{1}$}}
\multiput(8,0)(1,0){3}{\circle*{0.29}}
\multiput(8,1)(1,0){3}{\circle*{0.29}}
\multiput(8,0)(1,0){3}{\line(0,1){1}}
\put(8,0){\line(1,0){2}}
\put(8,0){\line(0,1){1}}
\put(8,0){\line(1,1){1}}
\put(8,1){\line(1,-1){1}}
\put(9,0){\line(1,1){1}}
\put(9,1){\line(1,-1){1}}
\put(7.7,1){\makebox(0,0){$a$}}
\put(9.3,1){\makebox(0,0){$b$}}
\put(10.3,1){\makebox(0,0){$c$}}
\put(7,0.5){\makebox(0,0){$G_{2}$}}
\end{picture}\caption{$G_{1}$ has $\alpha(G_{1})=4>\left\vert V(G_{1}
)\right\vert /2$ and $\mathrm{core}(G_{1})=\{u,v\}$, while the graph $G_{2}$ has
$\alpha(G_{2})=3=\left\vert V(G_{2})\right\vert /2$ and $\left\vert
\mathrm{core}(G_{1})\right\vert =\left\vert \{a,b,c\}\right\vert \geq2$.}
\label{fig3}
\end{figure}

\begin{theorem}
\cite{levm3} For a connected K\"{o}nig-Egerv\'{a}ry graph $G=(V,E)$ of order
at least two, the following assertions are true:

\emph{(i)} $\alpha(G)>|V(G)|/2$ if and only if $\left\vert \mathrm{core}%
(G)\right\vert >\left\vert N(core(G))\right\vert \geq1$;

\emph{(ii)} $\alpha(G)=|V(G)|/2$ if and only if $G$ has a perfect matching.
\end{theorem}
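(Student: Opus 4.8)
The plan is to treat the two statements separately: (ii) will be essentially a restatement of the defining identity, while (i) needs the matching structure of K\"{o}nig-Egerv\'{a}ry graphs and will be the substantive part. For (ii): since $\alpha(G)+\mu(G)=|V(G)|$, we have $\alpha(G)=|V(G)|/2$ iff $\mu(G)=|V(G)|/2$, and $\mu(G)=|V(G)|/2$ says exactly that a maximum matching saturates every vertex, i.e. that $G$ has a perfect matching (conversely a perfect matching has $|V(G)|/2$ edges, the largest possible number, hence is maximum). For the ``$\Leftarrow$'' direction of (i): assume $|\mathrm{core}(G)|>|N(\mathrm{core}(G))|\ge 1$; if $G$ had a perfect matching $M$, then, $\mathrm{core}(G)$ being independent, $M$ would match each vertex of $\mathrm{core}(G)$ to a distinct vertex of $N(\mathrm{core}(G))$, whence $|\mathrm{core}(G)|\le|N(\mathrm{core}(G))|$, a contradiction; so $G$ has no perfect matching, $\mu(G)<|V(G)|/2$, and $\alpha(G)=|V(G)|-\mu(G)>|V(G)|/2$.

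For the ``$\Rightarrow$'' direction of (i) assume $\alpha(G)>|V(G)|/2$, i.e. $\alpha(G)>\mu(G)$. I would invoke the standard fact that in a K\"{o}nig-Egerv\'{a}ry graph, for every $S\in\Omega(G)$ there is a maximum matching $M$ all of whose edges join $S$ to $V(G)\setminus S$ and which saturates $V(G)\setminus S$. Fixing such $S$ and $M$ and putting $U=S\setminus V(M)$, we get $U\ne\emptyset$ since $|S|=\alpha(G)>\mu(G)=|M|$. It then remains to prove $(\ast)$ that $U\subseteq\mathrm{core}(G)$ — which gives $\mathrm{core}(G)\ne\emptyset$, and then $N(\mathrm{core}(G))\ne\emptyset$ because $G$, being connected of order at least two, has no isolated vertex — and $(\ast\ast)$ that $|N(\mathrm{core}(G))|<|\mathrm{core}(G)|$.

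For $(\ast)$ I would first record the elementary lemma that for any $S,S'\in\Omega(G)$ there is a perfect matching $P$ between $S\setminus S'$ and $S'\setminus S$: if Hall's condition failed for some $X\subseteq S\setminus S'$, then $N_G(X)\cap S'\subseteq S'\setminus S$ would have fewer than $|X|$ vertices and $(S'\setminus N_G(X))\cup X$ would be an independent set larger than $S'$. Then, supposing some $u\in U$ misses a maximum independent set $S'$, I would follow the $\{P,M\}$-alternating walk out of $u$ starting with its $P$-edge; since $M$ saturates all of $V(G)\setminus S$, the walk can only stop at a vertex of $S\cap S'$, and exchanging $P$- and $M$-edges along it yields a maximum matching whose set of $S$-exposed vertices is $(U\setminus\{u\})\cup\{w\}$ for some $w\in S\cap S'$. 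Driving a family of such exchanges against the fixed count $|U|=\alpha(G)-\mu(G)$ to a contradiction is the first place where I expect the bookkeeping to become delicate.

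For $(\ast\ast)$: every vertex of $N(\mathrm{core}(G))$ lies in $V(G)\setminus S$ (as $\mathrm{core}(G)\subseteq S$ is independent), hence is $M$-saturated and matched into $S$; after rerouting $M$ along alternating paths as above one would like to assume that each vertex of $N(\mathrm{core}(G))$ is in fact matched into $\mathrm{core}(G)$ by a submatching of a maximum matching whose $S$-exposed set is still $U$, which would give $|N(\mathrm{core}(G))|\le|\mathrm{core}(G)\cap V(M)|=|\mathrm{core}(G)|-|U|<|\mathrm{core}(G)|$. Justifying this rerouting — equivalently, verifying Hall's condition for the bipartite graph between $N(\mathrm{core}(G))$ and $\mathrm{core}(G)$, i.e. that $\mathrm{core}(G)$ behaves like a critical independent set — is the main obstacle of the whole proof; it may be cleanest to pass to the decomposition $V(G)=\mathrm{core}(G)\cup N(\mathrm{core}(G))\cup R$ with $\mathrm{core}(G[R])=\emptyset$ and $\alpha(G)=|\mathrm{core}(G)|+\alpha(G[R])$, and to analyse the smaller K\"{o}nig-Egerv\'{a}ry graph $G[R]$, possibly by induction on $|V(G)|$.
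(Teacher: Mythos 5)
First, note that the paper you are working against does not prove this statement at all: it is quoted from \cite{levm3}, so your attempt has to stand on its own. Part (ii) and the ``$\Leftarrow$'' half of (i) are correct and complete as you wrote them. The problem is the ``$\Rightarrow$'' half of (i), which in your text is a plan rather than a proof: both of the steps you label $(\ast)$ and $(\ast\ast)$ are left open, and you say so yourself (``the bookkeeping becomes delicate'', ``the main obstacle of the whole proof''). For $(\ast)$ the alternating-walk exchange machinery is unnecessary and is the wrong place to spend effort: for \emph{any} maximum matching $M$ of a K\"{o}nig-Egerv\'{a}ry graph and any $S\in\Omega(G)$, counting gives $\left\vert S\cap V(M)\right\vert \geq \alpha(G)-(n-2\mu(G))=\mu(G)$, while each $M$-edge meets $S$ at most once, so equality holds; hence every $M$-exposed vertex lies in \emph{every} maximum independent set, i.e. $V\setminus V(M)\subseteq\mathrm{core}(G)$, which is nonempty since $n-2\mu(G)=\alpha(G)-\mu(G)>0$. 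This two-line argument replaces your $\{P,M\}$-walk analysis and also shows your $U$ is just the set of $M$-exposed vertices, independent of the choice of $S$.

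The genuine gap is $(\ast\ast)$. The inequality $\left\vert \mathrm{core}(G)\right\vert >\left\vert N(\mathrm{core}(G))\right\vert$ is exactly the nontrivial content of the cited theorem: it amounts to showing that $N(\mathrm{core}(G))$ can be matched into $\mathrm{core}(G)$ (equivalently, that $\mathrm{core}(G)$ behaves like a critical independent set, with $\left\vert \mathrm{core}(G)\right\vert -\left\vert N(\mathrm{core}(G))\right\vert \geq\alpha(G)-\mu(G)$). Your proposal does not establish this; ``rerouting $M$ along alternating paths'' presupposes precisely the Hall condition you would need to verify, and the closing suggestion to decompose $V(G)$ and induct on $\left\vert V(G)\right\vert$ is not carried out (in particular you would have to show the residual graph is again K\"{o}nig-Egerv\'{a}ry and control how its core relates to $\mathrm{core}(G)$, neither of which is addressed). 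As it stands, the forward implication of (i) is unproven, so the attempt cannot be accepted without supplying this matching/criticality argument or citing it.
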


It is known, \cite{BGL2002}, that if $G$ has no isolated vertices, then

\[
\alpha(G)>\mu(G) \Rightarrow \left\vert \mathrm{core}%
(G)\right\vert >\alpha(G)-\mu(G).
\]

Moreover, if
$G$ is a connected graph satisfying $3\mu(G)<\left\vert V(G)\right\vert $, then $G$ has
$\left\vert \mathrm{core}(G)\right\vert \geq2$, \cite{BGL2002}.

\begin{theorem}
\cite{BGL2002} The problem of whether there are vertices in a given graph $G$ belonging to
\textrm{core}$(G)$ is \textbf{NP}-hard.
\end{theorem}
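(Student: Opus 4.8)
The plan is to give a polynomial-time many-one reduction from the classical \textbf{NP}-complete problem STABLE SET --- given a graph $G$ and an integer $k$, is $\alpha(G)\geq k$? --- to the question ``is $\mathrm{core}(H)\neq\emptyset$?''. Recall that, since deleting a single vertex lowers the independence number by at most one, a vertex $v$ belongs to $\mathrm{core}(G)=\bigcap\Omega(G)$ precisely when $\alpha(G-v)<\alpha(G)$; thus the target problem asks whether $G$ possesses such a ``critical'' vertex, and it suffices to build, from $(G,k)$, a graph $H$ whose maximum stable sets share a common vertex if and only if $\alpha(G)\geq k$.

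Given $(G,k)$, with $1\leq k\leq|V(G)|$, I would construct $H$ as follows: take $G$ together with one new vertex $z$ (isolated, for the moment) and, disjointly, $k$ independent edges $a_{1}b_{1},\dots,a_{k}b_{k}$; then add every edge between $\{a_{i},b_{i}:1\leq i\leq k\}$ and $V(G)\cup\{z\}$. Because of the complete join, no stable set of $H$ uses a vertex of $V(G)\cup\{z\}$ together with a vertex from the $k$ edges, so $\alpha(H)=\max\{\alpha(G)+1,\,k\}$: on the first side the optimum is a maximum stable set of $G$ enlarged by $z$ (which is isolated inside that side), and on the second side it is one endpoint of each of the $k$ edges. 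Hence $\Omega(H)$ equals $\{S\cup\{z\}:S\in\Omega(G)\}$ if $\alpha(G)+1>k$; equals the family of all transversals of $\{a_{i}b_{i}\}$ if $\alpha(G)+1<k$; and equals the union of these two families if $\alpha(G)+1=k$.

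Reading off $\mathrm{core}(H)=\bigcap\Omega(H)$ from this description finishes the argument: in the first case $z$ lies in every maximum stable set (indeed $\mathrm{core}(H)=\{z\}\cup\mathrm{core}(G)$), so $\mathrm{core}(H)\neq\emptyset$; in the other two cases the transversals of the $k$ edges already have empty intersection, so $\mathrm{core}(H)=\emptyset$. Therefore $\mathrm{core}(H)\neq\emptyset\iff\alpha(G)+1>k\iff\alpha(G)\geq k$, and since $H$ is obtained in polynomial time, deciding $\mathrm{core}(H)\neq\emptyset$ is \textbf{NP}-hard.

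The step I expect to be the real obstacle is not the verification above --- which is routine once the construction is in hand --- but arriving at a construction that forces $\mathrm{core}(H)=\emptyset$ on \emph{every} ``no'' instance rather than only on the borderline one $\alpha(G)=k-1$. The $k$ independent edges serve exactly this purpose: joined to the rest, they contribute a fixed stable-set structure of independence number $k$ and empty core, which dominates whenever the $G$-side is not strictly larger; dually, the isolated vertex $z$ supplies the common vertex on the $G$-side whenever that side wins. The only mildly delicate case is the tie $\alpha(G)=k-1$, where one must observe that the edge-transversal family, on its own, already annihilates the core.
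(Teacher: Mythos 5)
Your reduction is correct. Note first that the paper you were asked to match does not actually prove this theorem: it is quoted from the cited source [BGL2002] (Boros, Golumbic, Levit), so there is no in-paper argument to compare against; your proposal is a self-contained proof of the cited fact. Checking it: the complete join between $\{a_{i},b_{i}\}$ and $V(G)\cup\{z\}$ indeed forces every stable set of $H$ to live entirely on one side, so $\alpha(H)=\max\{\alpha(G)+1,k\}$, with $z$ isolated inside its side guaranteeing that every maximum stable set of the $G$-side has the form $S\cup\{z\}$, $S\in\Omega(G)$. Your three-case analysis of $\Omega(H)$ is then accurate, and the key point you flag yourself --- that the family of transversals of $k\geq1$ disjoint edges has empty intersection, which kills the core both when the edge side strictly wins and in the tie $\alpha(G)=k-1$ --- is exactly what makes the equivalence $\mathrm{core}(H)\neq\emptyset\iff\alpha(G)\geq k$ hold with no borderline leak. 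Since $H$ has $n+2k+1$ vertices and is built in polynomial time, NP-hardness of deciding $\mathrm{core}(G)\neq\emptyset$ follows from NP-completeness of the stable set problem. This is the same general strategy as in the literature (a reduction from the independence-number decision problem), realized with a clean gadget: an auxiliary structure of independence number $k$ with empty core (the $k$ disjoint edges) joined against $G$ augmented by a ``core witness'' vertex $z$. One cosmetic remark: your opening observation that $v\in\mathrm{core}(G)$ iff $\alpha(G-v)<\alpha(G)$ is correct (and is also the engine behind Theorem 6 of the paper), but it is not actually used in the reduction and could be dropped.
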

It has been noticed in
\cite{BGL2002} that if $\digamma$ is a hereditary (i.e., induced subgraph
closed) family of graphs for which computing the independence number
$\alpha\left(  G\right)  $ is polynomial, then \textrm{core}$(G)$ can be
computed efficiently for $G\in\digamma$. For instance, it is true in the case
of perfect graphs, line graphs, circular graphs, and circular arc graphs. A
sketch of a sequential algorithm computing \textrm{core}$(G)$ for
K\"{o}nig-Egerv\'{a}ry graphs has been presented in \cite{ChCh2008}.

In this paper, we provide both sequential and parallel algorithms finding
\textrm{core}$(G)$ in polynomial time, where $G$ is a K\"{o}nig-Egerv\'{a}ry graph.

\section{Results and Algorithms}

The following result plays a key role in building our algorithms.

\begin{theorem}
\label{TH}Let $G=(V,E)$ be a K\"{o}nig-Egerv\'{a}ry graph of order $n$, and $v\in V$.

\begin{enumerate}
\item If $\mu(G)=\mu(G-v)$, then $G-v$ is a K\"{o}nig-Egerv\'{a}ry graph
and $v\in\mathrm{core}(G)$.

\item If $\mu(G)=\mu(G-v)+1$, then $G-v$ is a K\"{o}nig-Egerv\'{a}ry graph if
and only if $v\notin\mathrm{core}(G)$.
\end{enumerate}
\end{theorem}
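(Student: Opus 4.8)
The plan is to work from the characterization of König-Egerváry graphs in terms of the identity $\alpha + \mu = n$, and to track how $\alpha(G-v)$ and $\mu(G-v)$ differ from $\alpha(G)$ and $\mu(G)$ when a single vertex $v$ is deleted. The key elementary facts are: deleting a vertex drops the independence number by at most $1$ (and by exactly $1$ iff $v$ lies in every maximum independent set, i.e.\ $v \in \mathrm{core}(G)$), and deleting a vertex drops the matching number by at most $1$. Also $n(G-v) = n-1$, and by Theorem~\ref{th1} the inequality $\alpha(H)+\mu(H) \le n(H)$ always holds, so $G-v$ is König-Egerváry precisely when $\alpha(G-v) + \mu(G-v) = n-1$. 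Since $G$ is König-Egerváry we have $\alpha(G)+\mu(G) = n$; combining these, $G-v$ is König-Egerváry iff $(\alpha(G) - \alpha(G-v)) + (\mu(G) - \mu(G-v)) = 1$.

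For part~1, suppose $\mu(G-v) = \mu(G)$. Then the displayed identity forces $\alpha(G)-\alpha(G-v)=1$ whenever $G-v$ is König-Egerváry. So first I would argue $G-v$ is indeed König-Egerváry: since $\alpha(G-v) \ge \alpha(G)-1$ and $\mu(G-v)=\mu(G)$, we get $\alpha(G-v)+\mu(G-v) \ge n-1 = n(G-v)$, and the reverse inequality is automatic, giving equality. Hence $\alpha(G-v) = \alpha(G)-1$, and by the vertex-deletion fact above this means $v$ belongs to every maximum independent set, i.e.\ $v \in \mathrm{core}(G)$. (One should double-check the "$\alpha$ drops by $1$ iff $v\in\mathrm{core}$" equivalence: $\alpha(G-v)<\alpha(G)$ exactly when no maximum independent set of $G$ avoids $v$, which is the definition of $v \in \mathrm{core}(G)$.)

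For part~2, suppose $\mu(G-v) = \mu(G)-1$. Then $G-v$ is König-Egerváry iff $\alpha(G) - \alpha(G-v) = 0$, i.e.\ iff $\alpha(G-v) = \alpha(G)$, which (again by the deletion dichotomy) holds iff some maximum independent set of $G$ avoids $v$, i.e.\ iff $v \notin \mathrm{core}(G)$. This is exactly the claimed equivalence. The only subtlety is making sure the two cases $\mu(G-v) \in \{\mu(G), \mu(G)-1\}$ are the only ones — this is the standard fact that a maximum matching of $G-v$ extends to one of $G$ by adding at most one edge, equivalently $\mu(G) - 1 \le \mu(G-v) \le \mu(G)$.

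The main obstacle, and the one deserving the most care, is the bookkeeping in the direction "$G-v$ is König-Egerváry $\Rightarrow$ the exponents balance": one must be certain that the only way $\alpha(G-v)+\mu(G-v)$ can equal $n-1$ is through the specific drops dictated by each case, which relies on the sharp bounds $\alpha(G)-1 \le \alpha(G-v) \le \alpha(G)$ and $\mu(G)-1 \le \mu(G-v) \le \mu(G)$ being simultaneously tight only in the claimed pattern. Everything else is a short deduction from the defining equation $\alpha+\mu=n$.
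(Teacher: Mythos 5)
Your proof is correct and takes essentially the same route as the paper's: the same sharp bounds $\alpha(G)-1\le\alpha(G-v)\le\alpha(G)$ and $\mu(G)-1\le\mu(G-v)\le\mu(G)$, the equivalence $v\in\mathrm{core}(G)\Leftrightarrow\alpha(G-v)=\alpha(G)-1$, and the defining identity $\alpha(G)+\mu(G)=n$ together with the universal inequality $\alpha+\mu\le n$ (note this last fact is the standard inequality stated in the introduction, not Theorem~\ref{th1}). The only cosmetic difference is that you establish Case 1 directly by sandwiching $\alpha(G-v)+\mu(G-v)$ between $n-1$ and $n-1$, whereas the paper argues by contradiction.
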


\begin{proof}
By definition of $\mathrm{core}$ : $v\in\mathrm{core}(G)$ if and only if
$\alpha(G)=\alpha(G-v)+1$.

Clearly $\alpha(G-v)+\mu(G-v)\leq n-1$, $\alpha(G)-1\leq \alpha(G-v)\leq \alpha(G)$ and
$\mu(G)-1\leq \mu(G-v)\leq \mu(G)$ hold for every $v\in V(G)$.

\emph{Case 1.} $\mu(G)=\mu(G-v)$.
Assume, to the contrary, that $G-v$ is not a K\"{o}nig-Egerv\'{a}ry graph. Hence $G-v$
satisfies the inequality
\[
\alpha(G-v)+\mu(G-v)<n-1=\alpha(G)+\mu(G)-1,
\]
which leads to the following contradiction: $\alpha(G-v)<\alpha(G)-1$.
Therefore, $G-v$ is a K\"{o}nig-Egerv\'{a}ry graph, and, moreover, we infer that $\alpha(G-v)=\alpha(G)-1$,
 i.e., $v\in\mathrm{core}(G)$.

\emph{Case 2.} $\mu(G)=\mu(G-v)+1$. Then $G-v$ is a K\"{o}nig-Egerv\'{a}ry
graph if and only if
\[
\alpha(G-v)+\mu(G-v)=n-1=\alpha(G)+\mu(G)-1\Longleftrightarrow\alpha
(G)=\alpha(G-v),
\]
i.e., $v\notin\mathrm{core}(G)$, and this completes the proof.

\end{proof}

Taking into account that every subgraph of a bipartite graph is bipartite, one can see that Theorem \ref{TH}
is specified as follows.

\begin{corollary}
Let $G=(V,E)$ be a bipartite graph and $v\in V$. Then $v\in\mathrm{core}(G)$
if and only if $\mu(G)=\mu(G-v)$.
\end{corollary}

Let us notice that if $G$ is a K\"{o}nig-Egerv\'{a}ry graph and has a perfect matching,
then $\mu(G)=\mu(G-v)+1$ holds for every $v\in V(G)$.
Hence by Theorem \ref{TH} we deduce the following.

\begin{corollary}
Let $G=(V,E)$ be a K\"{o}nig-Egerv\'{a}ry graph with a perfect matching
and $v\in V$. Then $v\in\mathrm{core}(G)$ if and only if $G-v$ is not a
K\"{o}nig-Egerv\'{a}ry graph.
\end{corollary}

Theorem \ref{TH} motivates the subsequent algorithm finding $\mathrm{core}
(G)$ for a general K\"{o}nig-Egerv\'{a}ry graph $G$.

\begin{algorithm}
\label{A}

Input = a K\"{o}nig-Egerv\'{a}ry graph $G=(V,E)$

\vskip5pt
Output = $\mathrm{core}(G)=\bigcup\limits_{c(v)=1}\{v\}$, where $c\left(  v\right)  =\left\{
\begin{array}
[c]{c}
1\ \emph{, if} \ v\in\mathrm{core}(G)\\
0\ \emph{, if} \ v\notin\mathrm{core}(G)
\end{array}
\right.$

\vskip5pt
Sequential Complexity =
\[
O(m\bullet\sqrt{n})+O(n\ast\left(  m\bullet\sqrt{n}+\left(  m+n\right)
\right)  )=O(m\bullet n\bullet\sqrt{n})
\]

\vskip5pt

Parallel Complexity with $n$ processors =
\[
O(m\bullet\sqrt{n})+O(m\bullet\sqrt{n}+\left(  m+n\right)  )=O(m\bullet
\sqrt{n})
\]

\vskip4pt

\begin{enumerate}
\item compute $\mu(G)$

\vskip 3pt
\item \textbf{for all} $v\in V$ \textbf{do in parallel}

\vskip 3pt
\item \qquad compute $\mu(G-v)$

\vskip 3pt
\item \qquad \textbf{if} $\mu(G)=\mu(G-v)$

\vskip 3pt
\item \qquad\qquad \textbf{then} $c(v):=1$

\vskip 3pt
\item \qquad\qquad \textbf{else} compute $ke(v):=$ $G-v$ \textbf{is} a
K\"{o}nig-Egerv\'{a}ry graph

\vskip 3pt
\item \qquad\qquad\qquad $c(v):=\overline{ke(v)}$

\vskip 3pt
\item $\mathrm{core}(G):=\bigcup\limits_{c(v)=1}\{v\}$
\end{enumerate}
\end{algorithm}

\begin{figure}[h]
\setlength{\unitlength}{1cm}
\begin{picture}(5,1.3)\thicklines
\multiput(2,0)(1,0){5}{\circle*{0.29}}
\multiput(3,1)(2,0){2}{\circle*{0.29}}
\multiput(3,0)(2,0){2}{\line(0,1){1}}
\put(2,0){\line(1,1){1}}
\put(2,0){\line(1,0){4}}
\put(1.6,0){\makebox(0,0){$v_{1}$}}
\put(2.6,1){\makebox(0,0){$v_{2}$}}
\put(3.3,0.3){\makebox(0,0){$v_{3}$}}
\put(5.3,0.3){\makebox(0,0){$v_{4}$}}
\put(4.3,0.3){\makebox(0,0){$v_{5}$}}
\put(6.1,0.3){\makebox(0,0){$v_{6}$}}
\put(5.35,1){\makebox(0,0){$v_{7}$}}
\put(1,0.5){\makebox(0,0){$G_{1}$}}
\multiput(8.2,0)(1,0){4}{\circle*{0.29}}
\multiput(8.2,1)(1,0){3}{\circle*{0.29}}
\multiput(8.2,0)(1,0){3}{\line(0,1){1}}
\put(8.2,1){\line(1,0){1}}
\put(8.2,0){\line(1,0){3}}
\put(7.85,0){\makebox(0,0){$x_{1}$}}
\put(7.85,1){\makebox(0,0){$x_{2}$}}
\put(9.5,0.3){\makebox(0,0){$x_{3}$}}
\put(9.55,1){\makebox(0,0){$x_{4}$}}
\put(10.5,0.3){\makebox(0,0){$x_{5}$}}
\put(10.55,1){\makebox(0,0){$x_{6}$}}
\put(11.3,0.3){\makebox(0,0){$x_{7}$}}
\put(7.1,0.5){\makebox(0,0){$G_{2}$}}
\end{picture}\caption{$G_{1},G_{2}$ are K\"{o}nig-Egerv\'{a}ry graphs without
a perfect matching.}
\label{fig4}
\end{figure}
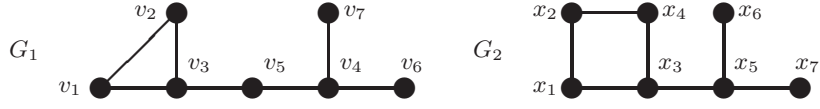

For instance, applying Algorithm \ref{A} for the graph $G_{1}$ from Figure \ref{fig4},
we get the following:

\begin{itemize}
\item $\mu(G_{1})=3$

\item $\mu(G_{1}-v_{i})=3=\mu(G_{1})$ for $i\in\{6,7\}$

\item $\mu(G_{1}-v_{i})=2<\mu(G_{1})$ for $i\in\{1,...,5\}$

\item $c(v_{i}):=1$ for $i\in\{6,7\}$

\item $ke(v_{i})=1$, i.e., $G_{1}-v_{i}$ \textbf{is} a
K\"{o}nig-Egerv\'{a}ry graph for $i\in\{1,2,3,4\}$

\item $ke(v_{i})=0$, i.e., $G_{1}-v_{i}$ \textbf{is not} a
K\"{o}nig-Egerv\'{a}ry graph, for $i\in\{5\}$

\item $c(v_{i})=0$ for $i\in\{1,...,4\}$

\item $c(v_{5}):=\overline{ke(v_{5})}=1$

\item consequently, $\mathrm{core}(G_{1})=\{v_{5},v_{6},v_{7}\}$.
\end{itemize}

\begin{proposition}
Algorithm \ref{A} correctly computes $\mathrm{core}(G)$ of a
K\"{o}nig-Egerv\'{a}ry graph $G$ on $n$ vertices and $m$ edges, with

\emph{(i)} sequential time complexity $O(m\bullet n\bullet\sqrt{n})$;

\emph{(ii)} parallel time complexity with $n$ processors $O(m\bullet\sqrt{n})$.
\end{proposition}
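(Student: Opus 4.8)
The plan is to verify both the correctness and the complexity claims by tracing through the structure of Algorithm~\ref{A} line by line, invoking Theorem~\ref{TH} for correctness and Theorems~\ref{th1} and~\ref{th2} (equivalently Corollary~\ref{cor1}) for the timing. First I would establish correctness: for each vertex $v\in V$, line~3 computes $\mu(G-v)$, and by the preliminary observation in the proof of Theorem~\ref{TH} we always have $\mu(G)-1\leq\mu(G-v)\leq\mu(G)$, so exactly one of the two cases of Theorem~\ref{TH} applies. If $\mu(G)=\mu(G-v)$ (line~4 succeeds), Theorem~\ref{TH}(1) gives $v\in\mathrm{core}(G)$, matching the assignment $c(v):=1$ on line~5. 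Otherwise $\mu(G)=\mu(G-v)+1$, and Theorem~\ref{TH}(2) says $v\notin\mathrm{core}(G)$ if and only if $G-v$ is a K\"onig-Egerv\'ary graph; thus $v\in\mathrm{core}(G)$ if and only if $\overline{ke(v)}=1$, which is exactly the assignment $c(v):=\overline{ke(v)}$ on line~7. Hence $c(v)=1$ precisely when $v\in\mathrm{core}(G)$, so the output $\bigcup_{c(v)=1}\{v\}$ on line~8 equals $\mathrm{core}(G)$.

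Next I would address the sequential complexity~(i). Line~1 computes a maximum matching of $G$ once, which by Theorem~\ref{th2} costs $O(m\sqrt{n})$. The loop on lines~2--7 runs $n$ times (once per vertex); on each iteration, line~3 computes $\mu(G-v)$ in $O(m\sqrt{n})$ by Theorem~\ref{th2} again, line~4 is an $O(1)$ comparison, and line~6, when reached, tests whether $G-v$ is a K\"onig-Egerv\'ary graph. For that test we already have the maximum matching of $G-v$ in hand from line~3, so Theorem~\ref{th1} applies directly, giving $O(m+n)$; line~7 is then $O(1)$. So each iteration costs $O(m\sqrt{n}+(m+n))=O(m\sqrt{n})$, and the loop totals $O(n\cdot m\sqrt{n})$. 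Adding line~1 and the trivial $O(n)$ for line~8 yields $O(m\sqrt{n})+O(n\cdot m\sqrt{n})=O(m\cdot n\cdot\sqrt{n})$, as claimed.

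For the parallel complexity~(ii), the key point is that the iterations of the loop on lines~2--7 are independent — each vertex $v$ only reads $G$ and $\mu(G)$ and writes its own cell $c(v)$ — so with $n$ processors all iterations run simultaneously, and the loop contributes only the cost of a single iteration, $O(m\sqrt{n}+(m+n))=O(m\sqrt{n})$. Sequentially before the loop we pay $O(m\sqrt{n})$ for line~1, and after it $O(n)$ (or $O(\log n)$ with a parallel prefix) to assemble the union on line~8, both dominated by $O(m\sqrt{n})$. Hence the total parallel time is $O(m\sqrt{n})$.

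The main obstacle — really the only subtle point — is justifying that the K\"onig-Egerv\'ary test on line~6 costs merely $O(m+n)$ rather than $O(m\sqrt{n})$: this relies on reusing the maximum matching of $G-v$ produced on line~3, so that Theorem~\ref{th1} (which assumes a maximum matching is given) rather than Corollary~\ref{cor1} governs the cost. I would make this dependency explicit in the write-up. Everything else is a routine bookkeeping of the line-by-line costs and a direct appeal to Theorem~\ref{TH} for correctness.
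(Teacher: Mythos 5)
Your proposal is correct and follows essentially the same route as the paper: correctness by the two cases of Theorem \ref{TH}, and the complexity bounds by charging $O(m\sqrt{n})$ per matching computation (Theorem \ref{th2}) plus an $O(m+n)$ K\"onig-Egerv\'ary test (Theorem \ref{th1}) that reuses the maximum matching of $G-v$ already computed. You merely spell out the correctness case analysis and the reuse of the matching more explicitly than the paper does, which is a fair elaboration rather than a different argument.
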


\begin{proof}
According to Theorem \ref{TH}, to decide whether a vertex $v\in V(G)$ belongs
or not to $\mathrm{core}(G)$, one has:

\begin{enumerate}
\item to compute $\mu(G)$, and this requires $O(m\bullet\sqrt{n})$ time,
by Theorem \ref{th2};

\item to compute $\mu(G-v)$ and a maximum matching $M$ of $G-v$, which can
be performed in $O(m\bullet\sqrt{n})$ time, according to Theorem \ref{th2};

\item to check whether $G-v$ that has $M$ as a maximum matching, is a
K\"{o}nig-Egerv\'{a}ry graph or not, and this test can be done in
$O(m+n)$ time, in accordance with Theorem \ref{th1}.
\end{enumerate}

Consequently, the sequential time complexity of Algorithm \ref{A} is
\[
O(m\bullet\sqrt{n})+O(n\ast\left(  m\bullet\sqrt{n}+\left(  m+n\right)
\right)  )=O(m\bullet n\bullet\sqrt{n}),
\]
while its parallel time complexity with $n$ processors is
\[
O(m\bullet\sqrt{n})+O(m\bullet\sqrt{n}+\left(  m+n\right)  )=O(m\bullet
\sqrt{n}),
\]
as claimed.
\end{proof}

If the input graph $G$ is bipartite, then $ke(v)=1$, for every $v\in V(G)$,
because $G-v$ is always bipartite, hence a K\"{o}nig-Egerv\'{a}ry graph.
Consequently, for bipartite graphs we obtain the following simpler algorithm.

\begin{algorithm}
\label{B}

Input = a bipartite graph $G=(V,E)$

Output = $\mathrm{core}(G)=\bigcup\limits_{c(v)=1}\{v\}$, where $c\left(  v\right)  =\left\{
\begin{array}
[c]{c}
1\ \emph{, if} \ v\in\mathrm{core}(G)\\
0\ \emph{, if} \ v\notin\mathrm{core}(G)
\end{array}
\right.$

\vskip 5pt

Sequential Complexity =
\[
O(m\bullet\sqrt{n})+O(n\ast\left(  m\bullet\sqrt{n}\right)  )=O(m\bullet
n\bullet\sqrt{n})
\]

Parallel Complexity with $n$ processors =
\[
O(m\bullet\sqrt{n})+O(m\bullet\sqrt{n})=O(m\bullet\sqrt{n})
\]

\begin{enumerate}
\item compute $\mu(G)$

\item \textbf{for all} $v\in V$ \textbf{do in parallel}

\item \qquad compute $\mu(G-v)$

\item \qquad \textbf{if} $\mu(G)=\mu(G-v)$

\item \qquad\qquad \textbf{then} $c(v):=1$

\item \qquad\qquad \textbf{else} $c(v):=0$

\item $\mathrm{core}(G):=\bigcup\limits_{c(v)=1}\{v\}$
\end{enumerate}
\end{algorithm}

For example, applying Algorithm \ref{B} to the graph $G_{2}$ depicted in Figure \ref{fig4}, we
obtain the following:

\begin{itemize}
\item $\mu(G_{2})=3$

\item $\mu(G_{2}-x_{i})=2<\mu(G_{2})$ for $i\in\{1,...,5\}$

\item $\mu(G_{2}-x_{i})=3=\mu(G_{2})$ for $i\in\{6,7\}$

\item $c(x_{i})=0$ for $i\in\{1,...,5\}$

\item $c(x_{i})=1$ for $i\in\{6,7\}$

\item consequently, $\mathrm{core}(G_{2})=\{x_{6},x_{7}\}$.
\end{itemize}

Let us notice that, unlike bipartite graphs, a K\"{o}nig-Egerv\'{a}ry graph
$G$ with a perfect matching can have $\mathrm{core}(G)\neq\emptyset$; e.g.,
the graphs $H_{1}$ and $H_{2}$ from Figure \ref{fig111} have at least one
perfect matching and $\mathrm{core}(H_{1})=\{x\}$, while $\mathrm{core}%
(H_{2})=\{u,v\}$.

\begin{figure}[h]
\setlength{\unitlength}{1cm}
\begin{picture}(5,1.3)\thicklines
\multiput(3,0)(1,0){4}{\circle*{0.29}}
\multiput(4,1)(1,0){2}{\circle*{0.29}}
\put(3,0){\line(1,0){3}}
\put(4,0){\line(0,1){1}}
\put(4,1){\line(1,0){1}}
\put(5,1){\line(1,-1){1}}
\put(2.7,0){\makebox(0,0){$x$}}
\put(2,0.5){\makebox(0,0){$H_{1}$}}
\multiput(8,0)(1,0){3}{\circle*{0.29}}
\multiput(8,1)(1,0){3}{\circle*{0.29}}
\put(8,0){\line(1,0){2}}
\put(8,0){\line(0,1){1}}
\put(8,1){\line(1,0){1}}
\put(9,0){\line(0,1){1}}
\put(9,0){\line(1,1){1}}
\put(10,0){\line(0,1){1}}
\put(8.25,0.3){\makebox(0,0){$u$}}
\put(9.3,1){\makebox(0,0){$v$}}
\put(7,0.5){\makebox(0,0){$H_{2}$}}
\end{picture}\caption{$H_{1}$ and $H_{2}$ are K\"{o}nig-Egerv\'{a}ry graphs
with perfect matchings.}
\label{fig111}
\end{figure}
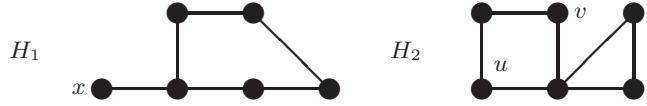

If $G$ is a K\"{o}nig-Egerv\'{a}ry graph having a perfect matching, then clearly,
$\mu(G)=\mu(G-v)+1$ holds for every $v\in V(G)$. Hence, $v\in\mathrm{core}(G)$
if and only if $G-v$ is not a K\"{o}nig-Egerv\'{a}ry graph.
Consequently, $\mathrm{core}(G)$ of a K\"{o}nig-Egerv\'{a}ry graph $G$
owning a perfect matching, may be found more efficiently.

\begin{algorithm}
\label{C}

Input = a K\"{o}nig-Egerv\'{a}ry graph $G$ with a perfect matching

Output = $\mathrm{core}(G)=\bigcup\limits_{c(v)=1}\{v\}$, where $c\left(  v\right)  =\left\{
\begin{array}
[c]{c}
1\ \emph{, if} \ v\in\mathrm{core}(G)\\
0\ \emph{, if} \ v\notin\mathrm{core}(G)
\end{array}
\right.$

\vskip 3pt

Sequential Complexity = $O(n\bullet\left(  m\bullet\sqrt{n}\right)
)=O(m\bullet n\bullet\sqrt{n})$

\vskip 3pt

Parallel Complexity with $n$ processors = $O(m\bullet\sqrt{n})$

\begin{enumerate}
\item \textbf{for all} $v\in V(G)$ \textbf{do in parallel}

\item \qquad compute $ke(v):=$ $G-v$ \textbf{is} a
K\"{o}nig-Egerv\'{a}ry graph

\item \qquad$c(v):=\overline{ke(v)}$

\item $\mathrm{core}(G):=\bigcup\limits_{c(v)=1}\{v\}$
\end{enumerate}
\end{algorithm}

Applying Algorithm \ref{C} for the graph $G_{1}$ from Figure \ref{Fig1}, we
get that: $ke(v_{1})=0$ and $ke(v_{3})=0$, i.e., $G_{1}-v_{1}$ and
$G_{1}-v_{3}$ are not K\"{o}nig-Egerv\'{a}ry graphs, while
$ke(v_{2})=ke(v_{4})=ke(v_{5})=ke(v_{6})=1$, i.e., $G_{1}-v_{i},i\in
\{2,4,5,6\}$, are still K\"{o}nig-Egerv\'{a}ry graphs. Consequently, it
follows that only $c(v_{1})=c(v_{3})=1$, and hence $\mathrm{core}(G_{1})=\{v_{1},v_{3}\}$.

\begin{figure}[h]
\setlength{\unitlength}{1cm}
\begin{picture}(5,1.2)\thicklines
\multiput(3,0)(1,0){4}{\circle*{0.29}}
\multiput(3,1)(3,0){2}{\circle*{0.29}}
\multiput(3,0)(3,0){2}{\line(0,1){1}}
\put(5,0){\line(1,1){1}}
\put(3,0){\line(1,0){3}}
\put(2.6,1){\makebox(0,0){$v_{1}$}}
\put(2.6,0){\makebox(0,0){$v_{2}$}}
\put(4,0.35){\makebox(0,0){$v_{3}$}}
\put(5,0.35){\makebox(0,0){$v_{4}$}}
\put(6.4,1){\makebox(0,0){$v_{5}$}}
\put(6.4,0){\makebox(0,0){$v_{6}$}}
\put(1.8,0.5){\makebox(0,0){$G_{1}$}}
\multiput(8.5,0)(1,0){3}{\circle*{0.29}}
\multiput(8.5,1)(1,0){3}{\circle*{0.29}}
\multiput(8.5,0)(1,0){3}{\line(0,1){1}}
\put(9.5,1){\line(1,0){1}}
\put(8.5,0){\line(1,0){2}}
\put(7.5,0.5){\makebox(0,0){$G_{2}$}}
\end{picture}\caption{$G_{1}$ and $G_{2}$ are K\"{o}nig-Egerv\'{a}ry graphs
with perfect matchings.}
\label{Fig1}
\end{figure}
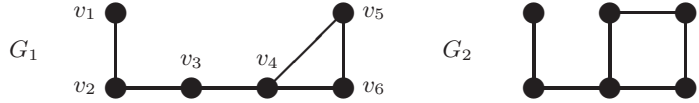

It is worth mentioning that if the input graph $G$ is bipartite having a perfect
matching, then Algorithm \ref{C} gives a constructive proof of Proposition
\ref{prop1}\emph{(ii)} claiming that $\mathrm{core}(G)=\emptyset$. For
example, using Algorithm \ref{C} for the bipartite graph $G_{2}$ from Figure
\ref{Fig1}, one can see that $ke(v)=1$ holds for every $v\in V(G_{2})$, and
hence, $\mathrm{core}(G_{2})=\emptyset$.

\section{Conclusions}

In this paper we present a sequential algorithm with time complexity
$O(m\bullet n^{\frac{3}{2}})$ finding $\mathrm{core}(G)$ of a
K\"{o}nig-Egerv\'{a}ry\emph{ }graph. Its parallel counterpart solves the same
problem in $O(m\bullet n^{\frac{1}{2}})$ time complexity.

It is known that the unique maximum independent set problem is \textbf{NP}-hard for general graphs \cite{pelc}.
One of applications of our results is a polynomial algorithm recognizing a K\"{o}nig-Egerv\'{a}ry graph with
a unique maximum independent set. In fact, the graph $G$ has a unique maximum
independent set if and only if $\mathrm{core}(G)$ is a maximal independent set
\cite{FRV2003}. Therefore, whenever there is a polynomial algorithm returning
$\mathrm{core}(G)$, one can decide in polynomial time whether $G$ has a unique
maximum independent set. Consequently, to recognize a
K\"{o}nig-Egerv\'{a}ry graph $G$ with a unique maximum independent set,
it is enough to run Algorithm \ref{A}, and then to try enlarging its output to
an independent set. The enlarging part is handled in $O(m)$ time complexity
sequentially, while in parallel it may be implemented with $O(1)$ time
complexity.

\end{document}